\documentclass[12pt]{article}%
\usepackage{amsmath}%
\usepackage{amsfonts}%
\usepackage{amssymb}%
\usepackage{graphicx}
\providecommand{\U}[1]{\protect\rule{.1in}{.1in}}
\newtheorem{theorem}{Theorem}

\newenvironment{proof}[1][Proof]{\noindent\textbf{#1.} }{\ \rule{0.5em}{0.5em}}
\begin{document}

\title{Competitive Search with a Faulty Satnav  (GPS): When Probability Matching is Rational}
\author{Steve Alpern$^{1}$ and Mark Broom$^{2}$ \\
$^{1}$ Warwick Business School, Warwick University, \\
Coventry CV4 7AL, UK \\
steve.alpern@wbs.ac.uk \\
$^{2}$ Department of Mathematics, City St.George's, \\
University of London EC1V 0HB, UK \\
Mark.Broom@city.ac.uk}
\maketitle

\begin{abstract}
A divisible treasure is located at a node $H$ of a network. From a given
start node a group of $n$ Searchers each seek to reach $H$ first, dividing
the treasure equally with the other first arrivers. This type of search game is called competitive search, where the conflict is not between hider and searcher but between searchers. Examples are search for oil deposits or for a pilot downed over enemy territory. In our model, the Searchers have a common Satnav (GPS) which points to $H$ at each branch node with a known probability $p<1$
and each Searcher chooses the probability $q$ with which they follow the
pointer. We consider a family of star graphs where the Searchers start at the
center and $H$ lies at one of the $k$ leaf nodes. We show that for all
parameter values $n,k,p,$ there is a unique trust probability $q$ which forms
a symmetric equilibrium. The equilibrium $q$ is increasing in $p,$ decreasing
in $n$ and increasing in $k$. Furthemore for fixed $k$ and $p$ we have $q$ equal to $p$
in the limit of $n$ tending to infinity. This last fact is a new example where what is known
in behavioural science as \textit{probability matching} is in fact rational.

\end{abstract}

\section{Introduction}

We consider a network search game where a group of $n$ Searchers each tries to
be the \textit{first} to reach a divisible treasure at a node $H,$ which will
then be split equally among the first arrivers. The type of problem, known as
\textit{competitive search,} was first posed by Nakai (1986) and has most
recently been updated by Duvocelle et al (2021), with other results in
between. This type of search  models patent races, predators searching for
prey, or mathematicians working on one of the Millenium Prize problems.  In a
military setting, the treasure might be a downed pilot, whom each side in the
conflict wishes to find first. A more general setting is that of so called
winner-take-all games (see Alpern and Howard (2017) where the player with the best
score wins (lowest search time, in our setting).

In this paper we assume the Searchers all navigate with what was introduced by
Alpern (2023) as a \textit{faulty satnav (GPS)}. This means that at every
branch node a pointer (the satnav) suggests an incident edge which goes along
the shortest path to $H$ - but is only correct with a given
\textit{reliability probability} $p<1.$ If incorrect it points randomly to
another incident edge. The only decision variable for the Searcher is the
\textit{trust probability} $q$ with which to follow the pointer; with
complementary probability $1-q$ it randomly picks one of the other edges.  We
assume that once a pointer direction is chosen as the chance move, it is fixed
throughout the play of the game. For example, $H$ might be the Zoo and some
fraction $1-p$ of traffic arrows for the Zoo might be blown from their position in
a morning storm and not be reset until the next day. So a Searcher who returns
to the same node finds the same pointer.  Note that if the Searcher always
follows the pointer he may never reach the target if the pointer is wrong, so
in general $q=1$ is not optimal even if there is a single Searcher who
simply wishes to reach $H$ in the least expected time. We take the network of
search to be a star with $k+1$ rays (one leads to $H,$ the other $k$ do not).
We seek (and uniquely find) a trust $q$ which forms a symmetric equilibrium -
if $n-1$ Searchers trust the pointer with probability $q,$ it is best for the
remaining Searcher with a potentially different trust probability $r$ to also do so, i.e. to set $r=q$. We shall often denote the unique equilbrium value of the trust probability $q$ by $\bar{q}$. An example with $n=2$ Searchers and $k=1$
was given in Alpern (2023), which could be solved with two person zero-sum
methods, though most of that paper was devoted to the case of a single
Searcher who wished to simply minimize the expected time to reach $H.$ We assume that all Searchers have the same brand of Satnav, so they all receive the same pointer.

Our main result, Theorem 1, says that for fixed parameters $n,k$ and $p,$ a
trust probability $q$ forms a symmetric equilbrium if and only if is satisfies
an equation $E_{n,k,p}(q)=0$ given explicitly in equation (\ref{E=0}). Futhermore there is
exactly one solution, so there is a unique symmetric equilibrium.

Our first result on competitive search is that for large $n$ the equilibrium
trust probability $q$ is the same as the reliability $p.$ This result is a
rational case of what is known as \textit{probability matching}. In general
this is an irrational behavior which chooses between two alternatives with the
probabilities with which each is the best (reward 1 rather than 0) - the
rational choice is to always choose the one with the higher probability of being the best one. There are
many documented examples where humans or animals follow this irrational
behavior. See, for example, Shanks et al. (2002), Saldand et al. (2022) and Arieli et al. (2022). We then consider the more difficult problem of an arbitrary number $n$
of Searchers and find that for any parameters $k$ and $p$ there is a unique
symmetric equilibrium where every Searcher trusts with the same probability
$q,$ which is increasing in $p.$ We then show that as the number $n$ of
Searchers increases the equilibrium $q$ eventually decreases to the already found
(probability matching) limit of $p.$ The intuition for this is that a large
number of Searchers will be following the pointer so even if it leads to the
target $H,$ there will be many to share the prize with. 

\section{Literature Review}

We separate our review of previous work into two categories: competitive
search and probability matching. 

\subsection{Literature on Competitive Search}

We believe the first paper to consider competitive search was Nakai (1986),
where two Searchers compete to be first to find a target moving among
discrete locations. A subsequent paper, Nakai (1991), allows the Searchers to
have distinct targets. Garnaev (2007) solved a two person competitive game on
a finite set of locations where players choose effort allocations to be first
to find the treasure. Flesch et al (2009) let the target move between two
locations according to a Markov process. Alpern and Zeng (2021) considered a
variation with two Searchers and a moving hider, where each Searcher wants to
be the first to find the hider. A variant is that of Angelopolous and
Lidbetter (2020) where a Searcher can be thought of competing with an ideal
Searcher who knows the location of the target (aiming to minimize the so called
competitive ratio). Finally, Duvocelle et al (2021) involved two Searchers
competing to first find a target which moves on a time varying Markov process
with a finite number of states. We note that competitive search is a special
case of the winner-take-all game proposed by Alpern and Howard (2017) where
players compete to obtain the lowest score (search time) by choosing among
their available score distributions. (As an example, it two players are
allowed to pick any weighted die with sides labelled 1 to 6 and the usual mean
7/2, they should pick the standard uniform die.)

Competitive search belongs to the more general concept of a \textit{search
game}. See for example the papers Zoroa et al (2011), Clarkson et al (2023)
and the monographs and surverys of Gal (1980), Alpern and Gal (2006), Garnaev
(2012) and Hohzaki (2016). The problem faced by an individual Searcher in competitive search can be phrased as maximizing the probability of arriving before the first of the others arrive (or at the same time). So it is a problem of finding an object by a deadline. However the deadline is unknown. For a related search problem see Lin and Singham (2016).

A recent example of competitive search, mentioned in the Introduction, is the game resulting when a pilot's plane is downed over enemy territory. The resulting race to find him first is a competitive game between the pilot's country and the enemy country. See for example Mouton et al (2015). A historical study if what has been named Combat Search and Rescue (CSAR) is given in Galdorisi and Phillips (2009). These analyses appear to involve a single decision maker rather than our game theoretic analysis, which we believe is novel.

Our paper, in particular, involves searching with hints, or predictions (the pointer). This is a recent innovation in search theory, pioneered by Alpern and Lidbetter (2023) and Angelopolour et al (2026). Both of these papers give a prediction of the Hider's location to the Searcher. The particular model of hints is that introduced by Alpern (2023) as the Fault Satnav (GPS) model.

\subsection{Literature on Probability Matching}\label{seclit2.2}

As stated above, one of our results (Proposition 1) says that when there are
many Searchers ($n\rightarrow\infty)$ the equilibrium trust $q$ is equal to
the signal reliability $p.$ This is an example where probability matching,
which is usually irrational, represents an equilibrium. A useful definition of
probability matching is given in Saldana et al (2022): "In a typical
probability learning task participants are presented with a repeated choice
between two response alternatives, one of which has a higher payoff
probability than the other. Rational choice theory requires that participants
should eventually allocate all their responses to the high-payoff alternative,
but previous research has found that people fail to maximize their payoffs.
Instead, it is commonly observed that people match their response
probabilities to the payoff probabilities." In a more positive result for
large populations, Baddeley et al (2019) find that , "...in the case of ants,
the fraction of the colony foraging at a given location should be proportional
to the probability that resources will be found there." 
There are very similar results more generally related to foraging at food patches. The Ideal Free Distribution (Fretwell and Lucas, 1969; Cressman et al, 2004; Cressman and K\v{r}ivan, 2006) indicates the densities at the food patches for which all individuals are obtaining the same food intake per unit time. A special case of this general theory is Parker's matching principle (Parker, 1978); when food intake is proportional to available resource, we precisely obtain probability matching.
Arieli et al (2022) consider probability matching (indeed overmatching) in a network context. 

\section{Competitive Search}

In \textit{competitive search}, the first players to reach the unit value
treasure (home node $H)$ split it equally between themselves. This is thus a
constant sum game. We retain the dynamics of the Fauty Satnave (GPS) problem
described in the first paragraph of the paper.

For this section we assume the network is a star with $k+1$ rays, one
leads to node $H$ containing the treasure. Due to the symmetry of the leaf
nodes for the star, we could equally interpret the problem as starting with
Nature choosing a random leaf node equiprobably to put the treasure at and
call it $H.$ Since there are $k+1$ rays and one leads to $H,$ a useful signal
can be assumed to satisfy the inequality
\begin{equation}
p>1/\left(  k+1\right)  ,\label{p greater}%
\end{equation}
an assumption we make throughout this section. (In a one shot problem, aiming
simply to reach $H,$ the pointer should be followed if this assumption holds).
We note that a two person zero sum example for the three node line graph
($k=1$ in our notation) was given in Alpern (2023).

The pointer given by the faulty satnav suggests the correct ray with
probability $p$ and every other ray with probability $\left(  1-p\right)  /k,$
which is the reason we adopt the simplifying notations%

\[
p^{\ast}=\frac{1-p}{k},~q^{\ast}=\frac{1-q}{k},~r^{\ast}=\frac{1-r}{k}.
\]

\textbf{Definition}: Given fixed parameters $n,$ $k$ and $p,$ a trust
probability $q=\bar{q}$ is called a symmetric equilibrium if when $n-1$ of the
players trust with probability $q,$ the expected payoff $R$ of the remaining
(focal) player is maximized when also trusting with probability $q.$ Clearly
at equilibrium the symmetry gives the expected payoff for every player as $1/n.$

\subsection{Probability Matching is an Equilibrium for Large $n.$}

It is interesting to note that for large $n,$ the symmetric equilibrium is the
same as the reliability probability $p.$ This gives an example of what is
called probability matching, as discussed in Section \ref{seclit2.2}
It turns out, not surprisingly, that taking $n$
large makes the problem easier to solve because in this case we can assume
that the treasure node $H$ is reached by at least one Searcher on the first
move. Later we will solve the game for all $n$, and then in Theorem 3 we will
show that the equilibrium approaches $p$ as $n$ goes to infinity. That formal
proof does not require the additional assumption that some Searcher
reaches $H$ on the first move. \\

\textbf{Proposition 1:} For large $n$ any $k$ and $p\geq1/\left(  k+1\right)
$ the unique symmetric equilibrium is given by $\bar{q}=p.$

\begin{proof}
For any $q$ such that $0<q<1$ and sufficiently large $n$ we can assume that at least one Searcher successfuly finds $H$ after a single step, so single step probabilities are all that we need to consider.

We consider a focal individual choosing trust probability $r$, when all others choose $q$. Our focal individual then receives a payoff of:
$$
R_{N,k,p,q}(r) = 0 \times \left( 1-p r -(1-p)r^{*} \right) + p r \sum _{m=0}^{n-1} \frac{1}{m+1} q^{m} (1-q)^{n-m-1} {n-1 \choose m}+
$$
$$
 (1-p) r^{*} \sum _{m=0}^{n-1} \frac{1}{m+1} (q^{*})^{m} (1-q^{*})^{n-m-1} {n-1 \choose m}=
$$
\begin{equation}
 \frac{pr}{q} \sum _{m=0}^{n-1} q^{m+1} (1-q)^{n-m-1} \frac{1}{n} {n \choose m+1}+  \frac{(1-p) r^{*} }{q^{*}}\sum _{m=0}^{n-1} (q^{*})^{m+1} (1-q^{*})^{n-m-1} \frac{1}{n} {n \choose m+1}.
\end{equation}
For n sufficiently large and $q$ sufficiently non-extreme to make the chance of all $n-1$ others going to the same site negligible, this implies that
\begin{equation}
R_{n,k,p,q}(r) \approx \frac{1}{n} \left( \frac{pr}{q}+\frac{(1-p)r^{*}}{q^{*}} \right)=\frac{1}{n} \left( \frac{pr}{q}+\frac{(1-p)(1-r)}{(1-q)} \right) .
\end{equation}
Differentiating with respect to $r$ gives a maximum payoff at $r=0$ if $q>p$, at $r=1$ if $q<p$ and any $r$ at $q=p$. Thus $\bar{q}=p$ is the unique Nash equilibrium.
\end{proof}

It is easy to see also that this is actually an Evolutionarily stable strategy (ESS) as the calculations mirror those of the classical sex-ratio problem (see Chapter 4 of  Broom and Rycht\'{a}\v{r}, 2022). Proposition 1 says that probability matching is a rational behavior when $n$ is large.

\subsection{Existence and Uniqueness of the Symmetric Equilibrium}

In this section we consider an arbitrary (possibly small) number $n$ of
Searchers. We derive the equilibrium equation $E\left(  q\right)  =0,$ for
fixed parameters $n,k,p,$ where $E_{n,k,p}\left(  q\right)$ is given by
\begin{equation}
E_{n,k,p}(q)=pq^{\ast}\left(  1-\left(  1-q^{\ast}\right)  ^{n}\right)  \left(  1-\left(
1-q\right)  ^{n-1}\right)  -p^{\ast}q\left(  1-\left(  1-q\right)
^{n}\right)  \left(  1-\left(  1-q^{\ast}\right)  ^{n-1}\right)  .\label{E=0}%
\end{equation}
We show in the first part of Theorem 1 that for any parameters $n,k,p$ there
is a unique $q=\bar{q}$ solving $E_{n,k,p}(q)=0$ and this is the unique symmetric equilibrium.

We illustrate this for parameters $n=5$ and $k=3$ in Figure 1. We plot the
curves $E_{5,3,p}(q)$ for $p=1/2,$(red,
solid), $p=2/3$ (green, dashed) and $p=3/4$ (blue, dotted) for $1/\left(
k+1\right)  =1/3<q<.8.$ Note that for $q=1$ we have $q^{\ast}=0$ and hence
also $E_{5,3,p}(q)=0.$ The $\bar{q}$ values of these intersections (the symmetric equilibrium
trust) are found by solving $E_{5,3,p}(q)=0$ for $q$ for $p=1/2,2/3,3/4,$ giving respectively $\bar{q}=0.53,$
$0.70,$ $0.78.$

It appears that the $E$ curve is higher for larger values of $p,$ resulting in
intersections with $E_{n,k,p}(q)=0$ further to the right. This will be formally
established later in Theorem 3.

\begin{figure}\label{fig:figure1}
\begin{center}
\includegraphics[width=0.6\textwidth]{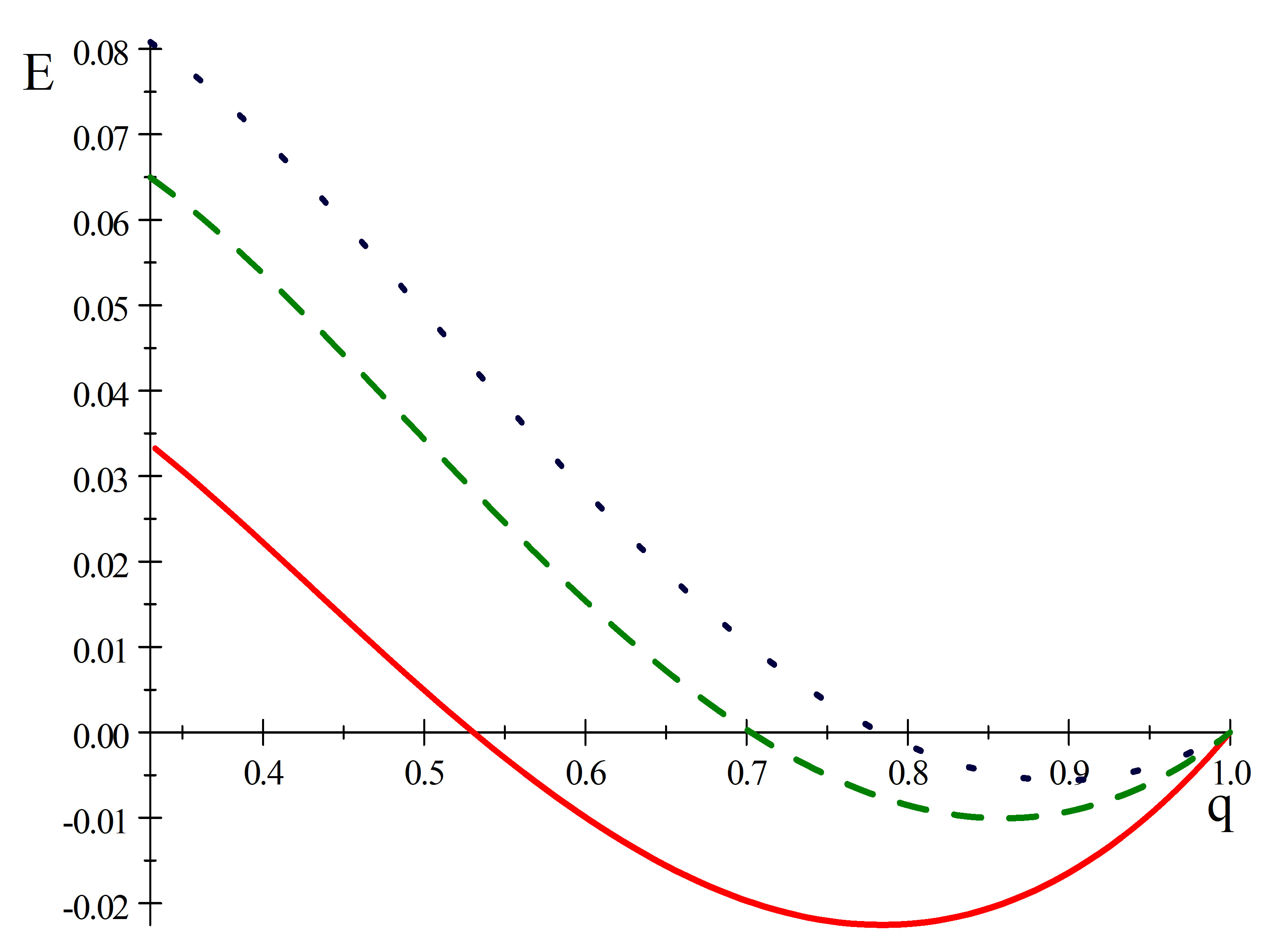}
\end{center}
\caption{Plots of $E\left(  q\right)  $ for $n=5,k=3$ and $p=1/2,~2/3,~3/4.$}
\end{figure}

In Figure 2, we plot the relationship $p=F_{N,k}(q)$ for
parameters $n=5$ and $k=3,$ where $q$ is the symmetric equilibrium trust
probability corresponding to reliability $p.$ To enable comparison with Figure
1, we draw horizontal lines at the three values of $p.$ To illustrate our
later Theorem 2, that equilibrium trust $q$ exceeds reliability $p,$ we also
draw the diagonal line $p=q.$%

\begin{figure}\label{fig:figure2}
\begin{center}
\includegraphics[width=0.6\textwidth]{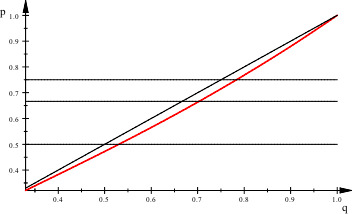}
\end{center}
\caption{Curve $p=F\left(  q\right)  $ (red), $p=q$ (black), $n=5,k=3.$}
\end{figure}

More formally, we have the following.

\begin{theorem}
Let the parameters $n,$ $k$ and $p>1/\left(  k+1\right)  $ be fixed. Then the
trust probability $q$ forms a symmetric equilibrium if and only if the
equilibrium equation $E_{n,k,p}(q)=0$ is
satisfied, where $E(q)$ is given by the formula (\ref{E=0}). \newline
Furthermore, solving $E_{n,k,p}(q)=0$ for $p$ gives a continuous function $p=F_{n,k}(\bar{q})$ (satisfying $E_{n,k,F_{n,k}(\bar{q})}(\bar{q})=0)$ which
is increasing in $\bar{q}.$ Specifically we have%
\begin{equation}
F_{n,k}(q)  =\frac{\left(  1-\left(  1-q^{\ast}\right)  ^{n-1}\right)
\left(  1-\left(  1-q\right)  ^{n}\right)  q}{\left(  1-\left(  1-q^{\ast
}\right)  ^{n-1}\right)  \left(  1-\left(  1-q\right)  ^{n}\right)  q+\left(
1-\left(  1-q^{\ast}\right)  ^{n}\right)  \left(  1-\left(  1-q\right)
^{n-1}\right)  \left(  1-q\right)  }.\label{F}%
\end{equation}
In particular, for any $n,k,p$ there exists a unique equilibrium trust $\bar{q}$, which is increasing in $p$.
\end{theorem}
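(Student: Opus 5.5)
The plan is to compute the focal Searcher's payoff $R(r)$ exactly, show that it is affine in his trust $r$, and read off the equilibrium condition from the sign of its slope. The focal Searcher, trusting with probability $r$, makes a single randomized choice of first ray; the other $n-1$ Searchers trust with probability $q$. I would split according to whether the pointer is correct (probability $p$) or points to a specific wrong ray (total probability $1-p$, so that $H$ itself is picked by a non-trusting Searcher with probability $q^{\ast}$). I expect to show that
\[
R(r)=r\,A(q)+(1-r)\,B(q),
\]
where $A$ and $B$ are his expected payoffs from following, respectively not following, the pointer at the first step. The same-pointer assumption is what should make this work: a Searcher who fails on the first step is at least one round behind. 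The factors $1-(1-q)^{n-1}$ and $1-(1-q^{\ast})^{n-1}$ in (\ref{E=0}) should be exactly the probabilities that some other Searcher reaches $H$ in round one, in which case the focal Searcher's continuation value is $0$. The factors $\left(1-(1-q)^n\right)/(nq)$ and $\left(1-(1-q^{\ast})^n\right)/(nq^{\ast})$ should be the expected shares when he does reach $H$ in round one, computed by the binomial identity already used in the proof of Proposition 1.

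Once $R$ is affine, $q$ is a symmetric equilibrium iff $r=q$ is a maximizer. For interior $q$ this means $A(q)=B(q)$. I would clear the positive denominators in $A(q)-B(q)$ and check that what remains is a positive multiple of $E_{n,k,p}(q)$. This gives the first assertion. I would also check the boundary cases. For $q=0$ or $q$ too small, following the pointer is strictly better, since $p>1/(k+1)$. For $q=1$, a deviation to $r<1$ pays off when the pointer is wrong. So no boundary equilibria arise, apart from the degenerate root at $q=1$ noted in the text, which is excluded because $p<1$.

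Next, $E_{n,k,p}(q)$ is affine in $p$, since $p^{\ast}=(1-p)/k$. Setting it to zero and solving gives (\ref{F}) directly. Continuity is clear. For the range, at $q=1/(k+1)$ we have $q=q^{\ast}$, and the bracketed factors in (\ref{F}) match, so $F=q=1/(k+1)$; at $q=1$ we have $q^{\ast}=0$, so $F=1$. Hence once $F$ is shown strictly increasing, it is a bijection from $[1/(k+1),1]$ onto $[1/(k+1),1]$. This yields a unique $\bar q=F^{-1}(p)$ for each admissible $p$, increasing in $p$. I would also verify that $E$ has no roots with $q<1/(k+1)$, where the first term of $E$ is too small relative to the second.

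The main obstacle is the monotonicity of $F$. I would write $F=1/(1+G)$ with $G=G_1G_2$. Using $S_m(x)=1+x+\dots+x^{m-1}$ and $x=1-q$, the first factor is
\[
G_1=\frac{(1-q)\left(1-(1-q)^{n-1}\right)}{q\left(1-(1-q)^{n}\right)}=\frac{xS_{n-1}(x)}{1+xS_{n-1}(x)},
\]
which is decreasing in $q$. With $z=1-q^{\ast}$, the second factor is
\[
G_2=\frac{1-(1-q^{\ast})^{n}}{1-(1-q^{\ast})^{n-1}}=1+\frac{z^{n-1}}{S_{n-1}(z)},
\]
which is unfortunately increasing in $q$. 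So I must show that the logarithmic derivative of $G_1$ dominates that of $G_2$. Here I would exploit that $q^{\ast}$ moves only at rate $1/k$ compared with $q$, and that $q^{\ast}\le 1-q$ on the relevant range $q\ge 1/(k+1)$. I would try first to bound $\left|\frac{d}{dq}\log G_2\right|$ by $\frac1k\cdot\frac{(n-1)}{z\,S_{n-1}(z)}$ and compare it termwise with $\frac{d}{dq}\log G_1$, falling back on a direct polynomial sign computation if the termwise comparison fails.
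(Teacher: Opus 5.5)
Your derivation of the equilibrium equation does not go through, because $R$ is not affine in $r$. In the paper's model the trust probability is used at every return to the centre. With positive probability all $n$ Searchers miss in a round, and then everyone is back at the centre together with the same pointer. So the focal Searcher is not behind, and his continuation value is $R$ itself (branch by branch), which depends on $r$. Summing the geometric series over rounds gives
\[
R(r)=\frac{p\,r\,(1-(1-q)^{n})/(nq)}{1-(1-q)^{n-1}(1-r)}+\frac{(1-p)\,r^{\ast}\,(1-(1-q^{\ast})^{n})/(nq^{\ast})}{1-(1-q^{\ast})^{n-1}(1-r^{\ast})}.
\]
This is a sum of two concave linear-fractional functions of $r$, so for interior $q$ the correct criterion is $R'(q)=0$. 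Using $1-(1-q)^{n-1}(1-q)=1-(1-q)^{n}$ (and likewise for $q^{\ast}$), this reads
\[
\frac{p\,(1-(1-q)^{n-1})}{q\,(1-(1-q)^{n})}=\frac{p^{\ast}\,(1-(1-q^{\ast})^{n-1})}{q^{\ast}\,(1-(1-q^{\ast})^{n})},
\]
which is exactly $E_{n,k,p}(q)=0$. The factors $1-(1-q)^{n}$ and $1-(1-q^{\ast})^{n}$ are per-round stopping probabilities from the geometric series, not round-one shares.

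Suppose instead you let $r$ govern only the first step and revert to $q$ afterwards, with continuation $1/n$ in each branch after a round in which everybody misses. Then your condition $A(q)=B(q)$ works out to $pq^{\ast}(1-(1-q)^{n-1})=p^{\ast}q(1-(1-q^{\ast})^{n-1})$. If you take the continuation to be $0$ whenever the focal Searcher misses, you get the one-shot condition $pq^{\ast}(1-(1-q)^{n})=p^{\ast}q(1-(1-q^{\ast})^{n})$ behind Proposition 1. Neither coincides with $E=0$. The first differs from it by the factor $(1-(1-q)^{n})/(1-(1-q^{\ast})^{n})$, which equals $1$ only when $q=q^{\ast}$. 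So the step ``clear denominators and recognise $E$'' would fail.

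Your monotonicity step is also left open, and the obstacle comes from the grouping $G=G_1G_2$. (Your closed form for $G_1$ has also lost a factor $1/q$: it should be $xS_{n-1}(x)/(q\,S_n(x))$.) Pair each $(n-1)$-power with its own base instead. Using $1-q=kq^{\ast}$, $1-(1-q)^{n-1}=q\,S_{n-1}(1-q)$ and $1-(1-q^{\ast})^{n-1}=q^{\ast}S_{n-1}(1-q^{\ast})$,
\[
G=k\,\frac{S_{n-1}(1-q)}{S_{n-1}(1-q^{\ast})}\cdot\frac{1-(1-q^{\ast})^{n}}{1-(1-q)^{n}}.
\]
The first factor is non-increasing and the second strictly decreasing in $q$, since each has a positive decreasing numerator over a positive increasing denominator. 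Hence $F=1/(1+G)$ is strictly increasing on all of $(0,1)$. This is essentially the paper's argument, and it also rules out roots with $q<1/(k+1)$, since $F(1/(k+1))=1/(k+1)<p$.

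Your remaining steps are fine: solving the affine-in-$p$ equation for $F$, the endpoint values, and discarding the spurious root $q=1$. At $q=1$ the formula for $F$ is $0/0$ and equals $1$ only as a limit.
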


\begin{proof}
If the pointer is correct, the expected reward that the focal player gets from the next turn is
$$
\sum_{m=0}^{n-1} \frac{1}{m+1} {n-1 \choose m} q^{m}(1-q)^{n-m-1}r=\frac{r}{nq} \sum_{m=0}^{n-1} {n \choose m+1} q^{m+1}(1-q)^{n-(m+1)}=
$$ 
\begin{equation}
\frac{r}{nq} \sum_{j=1}^{n} {n \choose m} q^{m}(1-q)^{n-m}=\frac{r(1-(1-q)^{n})}{nq}.
\end{equation}
This occurs with probability $p$. Analogously, we obtain the expected reward if the pointer is incorrect, occuring with probability $1-p$, of 
$$
\frac{r^{*}(1-(1-q^{*})^{n})}{nq^{*}} .
$$ 
The probability that nobody finds the reward in a given turn is $(1-q)^{n-1}(1-r)$ if the pointer is correct and is $(1-q^{*})^{n-1}(1-r^{*})$ if the pointer is incorrect.

Taken together, we get the expected reward from the game for our focal player as
\begin{equation}
R_{n,k,p,q}(r)=\frac{\frac{pr(1-(1-q)^{n})}{nq}}{1-(1-q)^{n-1}(1-r)}+\frac{\frac{((1-p)r^{*}(1-(1-q^{*})^{n})}{nq^{*}}}{1-(1-q^{*})^{n-1}(1-r^{*})}.
\end{equation}

Differentiating with respect to $r$ gives
$$
\frac{p}{q} \frac{[1-(1-q)^{n}][1-(1-q)^{n-1}(1-r)]-(1-q)^{n-1}r[1-(1-q)^{n}]}{[1-(1-q)^{n-1}(1-r)]^{2}}+
$$
\begin{equation}
\frac{1-p}{q^{*}} \frac{[1-(1-q^{*})^{n}][1-(1-q^{*})^{n-1}(1-r^{*})]-(1-q^{*})^{n-1}r^{*}[1-(1-q^{*})^{n}]}{[1-(1-q^{*})^{n-1}(1-r^{*})]^{2}} \frac{d r^{*}}{dr}=0.
\end{equation}
Substituting $r=q$ and simplifying, noting that $d r^{*}/dr=-1/k$, gives us
\begin{equation}\label{eq:Themainequality}
R_{n,k,p,q}(r)=pq^{*}(1-(1-q^{*})^{n})(1-(1-q)^{n-1})-p^{*}q(1-(1-q)^{n})(1-(1-q^{*})^{n-1})=0,
\end{equation}
as required.

We now consider the theorem's second assertion.
The solution of equation (\ref{eq:Themainequality}) satisfies
\begin{equation}
pq^{*}(1-(1-q^{*})^{n})(1-(1-q)^{n-1})=p^{*}q(1-(1-q)^{n})(1-(1-q^{*})^{n-1}).
\end{equation}
Recalling equation (\ref{F}) and denoting $f(a,b)=(1-(1-a)^{n})(1-(1-b)^{n-1})$, the above implies
\begin{equation}\label{eq:pequation}
p=F_{N,k}(q)=
\frac{qf(q,q^{*})}{(1-q)f(q^{*},q)+qf(q,q^{*})}=
\frac{1}{1+\frac{(1-q) f(q^{*},q)}{q f(q,q^{*})}}.
\end{equation}
To show that $p$ is increasing with $q$ we need to show that $(1-q) f(q^{*},q)/qf(q,q^{*})$ decreases with $q$. We have that
\begin{equation}\label{eq:fratio}
\frac{(1-q)f(q^{*},q)}{qf(q,q^{*})}=\frac{q^{*}}{1-(1-q^{*})^{N-1}}\frac{1-(1-q)^{N-1}}{q}\frac{1-(1-q^{*})^{N}}{1-(1-q)^{N}},
\end{equation}
since from the standard geometric series summation $\sum_{i=0}^{k-1} r^{i}=(1-r^{k})/(1-r)$, we have that
\begin{equation}\label{eq:fratio2}
\frac{f(q^{*},q)}{f(q,q^{*})}=\frac{\sum_{i=0}^{n-2} (1-q)^{i}}{\sum_{i=0}^{n-2} (1-q^{*})^{i}}  \frac{1-(1-q^{*})^{n}}{1-(1-q)^{n}}.
\end{equation}
We know that $1-q$ is decreasing in $q$ and $1-q^{*}$ is increasing in $q$. The first ratio above is then clearly a summation decreasing in $q$ divided by one increasing in $q$. The second ratio is also clearly a term decreasing in $q$ divided by one increasing in $q$. As all terms are positive, the whole expression is decreasing in $q$, hence $p$ is increasing in $q$.

As $q \rightarrow 1$, $q^{*} \rightarrow 0$, so that the ratio in equation ( \ref{eq:fratio2}) tends to 0, we have from equation (\ref{eq:pequation}) that $p \rightarrow 1$. Similarly, as $q \rightarrow 1/(k+1)$, $q^{*} \rightarrow 1/(k+1)$, so that the ratio in equation (\ref{eq:fratio2}) tends to 1, yielding from equation (\ref{eq:pequation}) that $p \rightarrow 1/(k+1)$.

Thus there is a 1-1 map between $p$ and $q$ in the range $[1/(k+1), 1]$ and so a unique $q$ as a function of $p$.

\end{proof}

\subsection{At Equilibrium, Signal Reliability Exceeds Trust}

We showed in Proposition 1 that for large (infinite) $n,$ equilibrium trust is
equal to the signal reliability probability $p,$ an example of what is known
as probability matching. In this section we show that for finite $n$, at
equilibrium the Searchers should trust a signal of reliability $p$ with a
probability that \textit{exceeds} $p.$ This is already apparent for particular
parameters from Figure 2, where the red curve $p=F\left(  q\right)  $ lies
\textit{below} the line $p=q,$ which it intersects at $q=1.$ We shall see in
the next section that the degree by which $p$ lies below the equilibrium value $\bar{q}$ eventually decreases as the
number $n$ of Searchers increases (we say a sequence $a(n)$ is eventually decreasing if  for some $N$, $a(n+1) \leq a(n)$ for all $n \geq N$). 

\begin{theorem}
At equilibrium, trust $\bar{q}$ exceeds signal reliability $p.$ That is, for any
parameters $n$ and $k,$ $\bar{q}>p=F_{n,k}(\bar{q}).$
\end{theorem}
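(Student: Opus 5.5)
The plan is to work directly with the explicit formula (\ref{F}) from Theorem 1, in the form (\ref{eq:pequation}):
\[
F_{n,k}(q)=\frac{1}{1+\frac{1-q}{q}\,G(q)},\qquad G(q)=\frac{f(q^{\ast},q)}{f(q,q^{\ast})},
\]
where $f(a,b)=(1-(1-a)^{n})(1-(1-b)^{n-1})$. Since $p<1$, Theorem 1 places the equilibrium trust $\bar{q}$ strictly inside $(1/(k+1),1)$. On this interval $0<(1-q)/q<\infty$, so
\[
F_{n,k}(q)<q \iff 1+\tfrac{1-q}{q}G(q)>\tfrac{1}{q} \iff \tfrac{1-q}{q}\,G(q)>\tfrac{1-q}{q} \iff G(q)>1 .
\]
The whole theorem therefore reduces to proving $G(q)>1$ for $1/(k+1)<q<1$.

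Set $x=1-q$ and $y=1-q^{\ast}$. The assumption $q>1/(k+1)$ is equivalent to $q>q^{\ast}$, so $0<x<y<1$. The inequality $G>1$ reads
\[
(1-y^{n})(1-x^{n-1})>(1-x^{n})(1-y^{n-1}),
\]
which is equivalent to $h(y)>h(x)$ for
\[
h(t)=\frac{1-t^{n}}{1-t^{n-1}} .
\]
So it suffices to show that $h$ is strictly increasing on $(0,1)$. This is the one step that needs care: in the factorization (\ref{eq:fratio2}) the two factors move in opposite directions (the first is below $1$ and the second above $1$), so a naive term-by-term comparison does not settle the sign. Regrouping into $h$ avoids this.

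To prove monotonicity of $h$, I will use the geometric sum identity already invoked in Theorem 1:
\[
h(t)=\frac{\sum_{i=0}^{n-1}t^{i}}{\sum_{i=0}^{n-2}t^{i}}=1+\frac{t^{n-1}}{\sum_{i=0}^{n-2}t^{i}}=1+\frac{1}{\sum_{j=1}^{n-1}t^{-j}} .
\]
Each $t^{-j}$ is strictly decreasing on $(0,1)$, so the denominator is strictly decreasing and $h$ is strictly increasing, provided $n\ge 2$. The case $n=1$ is degenerate: the factor $1-(1-b)^{0}$ vanishes, so competitive search is only meaningful for $n\ge2$, and I will note this assumption. Hence $h(y)>h(x)$, so $G(\bar q)>1$ and therefore $p=F_{n,k}(\bar q)<\bar q$.

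Finally, I will remark that equality $F(q)=q$ holds only at the endpoint $q=1$, where $q^{\ast}=0$. This matches Figure 2, where the curve $p=F(q)$ lies below the diagonal and meets it only at $q=1$.
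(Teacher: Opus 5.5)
Your proof is correct, and it takes a different route from the paper's at the one step that matters. Both arguments reduce the theorem to the same two-variable inequality $(1-x^{n-1})(1-y^{n})>(1-x^{n})(1-y^{n-1})$ for $0<x<y<1$, which is the paper's $g(x,y)>0$.

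The paper proves it by fixing $x$ and studying $g(x,\cdot)$ on $[x,1]$. That function vanishes at both endpoints. Rolle's theorem plus the explicit derivative gives a unique critical point $\bar y\in(x,1)$, so there are no interior zeros. The sign of the derivative at $y=1$ then shows the function is positive.

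You instead divide by $(1-x^{n-1})(1-y^{n-1})$ to separate the variables. The inequality becomes $h(y)>h(x)$ for $h(t)=(1-t^{n})/(1-t^{n-1})$. The form $h(t)=1+1/\sum_{j=1}^{n-1}t^{-j}$ then gives strict monotonicity with no calculus and no critical-point bookkeeping.

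The Rolle argument is a generic technique that does not require spotting this factorization, but yours is shorter and more informative. By (\ref{eq:Themainequalitybs4}), the equilibrium condition is exactly $\frac{p}{1-p}=\frac{q}{1-q}\cdot\frac{h(1-q)}{h(1-q^{*})}$. So monotonicity of $h$ gives Theorem 2 in one line, since $1-q<1-q^{*}$. It also gives the key step of Theorem 4: for fixed $q$, raising $k$ raises $1-q^{*}$, which lowers $p$. Indeed, the inequality $nx-(n-1)-x^{n}<0$ used in Theorem 4 is equivalent to $h'>0$ on $(0,1)$.

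Two side remarks in your write-up are inaccurate, though neither affects the argument. First, as printed, the right-hand side of (\ref{eq:fratio2}) equals $\frac{q^{*}}{q}G$, and both of its factors are below $1$. In the correct factorization $G=\frac{1-x^{n-1}}{1-y^{n-1}}\cdot\frac{1-y^{n}}{1-x^{n}}$, it is the first factor that exceeds $1$, not the second. Second, $F(q)=q$ holds at the lower endpoint $q=1/(k+1)$ as well as at $q=1$, because there $x=y$ and $G=1$.
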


\begin{proof}
Recall our formula $F$ for $p$ in terms of $q$ given in (\ref{F}) and (\ref{eq:pequation}), which implies
that%
\begin{equation}
\frac{p}{q}=\frac{f(q,q^{*})}{qf(q,q^{*})+\left(  1-q\right)  f(q^{*},q)}<1,\text{ if }f(q,q^{*})<f(q^{*},q).
\end{equation}
We now show that $f(q,q^{*})<f(q^{*},q)$ which will imply that $p/q<1,$ or $p<q.$ We set
$x\equiv1-q$ and $y\equiv1-q^{\ast}.$ Since $p$ and $q$ are assumed to be at
least $1/\left(  k+1\right)  ,$ it follows that $q^{*}\leq q$ and so $x \leq y.$ In
this notation we have%
\[
f(q,q^{*})-f(q^{*},q)=g\left(  x,y\right)  \equiv\left(  1-x^{n-1}\right)  \left(
1-y^{n}\right)  -\left(  1-x^{n}\right)  \left(  1-y^{n-1}\right).
\]
It remains only to show that $g\left(  x,y\right)  >0$ for $0<x<y<1.$ First
note that for any $x$ we have $g\left(  x,y\right)  =0$ for $y=x$ and for
$y=1.$ We will show that for any fixed $x$, $g\left(  x,y\right)  $ is
positive along the line $y$ from $x$ to $1.$ Setting $h\left(  y\right)
=g\left(  x,y\right)  $ we see that $h$ has zeros at $y=x$ and $y=1$ so by
Rolle's Theorem it has a critical point $\bar{y}$ in $\left(  x,1\right)  $,
which we can determine by calculating%
\begin{align*}
h^{\prime}\left(  y\right)    & =\left(  n-1\right)  ~\left(  1-x^{n}\right)
~y^{n-2}-n\left(  1-x^{n-1}\right)  y^{n-1},\text{ with}\\
h^{\prime}\left(  y\right)    & =0\text{ uniquely for }\bar{y}=\frac{n-1}%
{n}\frac{\left(  1-x^{n}\right)  }{\left(  1-x^{n-1}\right)  }<1.
\end{align*}
The uniqueness of $\bar{y}$ ensures there cannot be any zeros of $h$ between
$x$ and $1.$ To determine whether $h$ (and hence $g)$ is negative or positive
on $\left[  x,1\right]  ,$ we calculate $h^{\prime}\left(  1\right)  =\left(
n-1\right)  ~\left(  1-x^{n}\right)  -n\left(  1-x^{n-1}\right)  <0$ for
$x<1.$ This follows from our earlier observation that $\bar{y}<1.$ This
implies $h$ is positive for $x$ close to 1. So $h$ and hence $g$ is positive
and so $f(q,q^{*})<f(q^{*},q)$ and hence $p<\bar{q}.$
\end{proof}

\subsection{Equilibrium Trust is eventually decreasing in the Number of Searchers}

We now consider the effect of population size $n$ on the equilibrium trust. In
Figure 3 we plot $p=F_{n,k}(q)$ for $k=3$ and various values on
$n.$ Since $p$ is a given parameter (independent variable), we also plot for
convenience in Figure 4 the same curve with the dependent variable $q=\bar{q}$
on the vertical axis. From Figure 4 we see that the equilibrium $\bar{q}$ is decreasing
in $n$ for all $p.$ In fact this pattern may not hold for small values of $n.$%

\begin{figure}\label{fig:figure3}
\begin{center}
\includegraphics[width=0.6\textwidth]{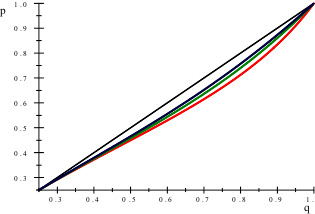}
\end{center}
\caption{Plot of $p=F\left(  q\right)  ,n=2$(red), $3$ (green), $4$ (blue)
for $k=3$.}
\end{figure}

\begin{figure}\label{fig:figure4}
\begin{center}
\includegraphics[width=0.6\textwidth]{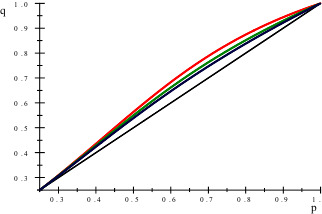}
\end{center}
\caption{Plot of $\bar{q}=F^{-1}\left(  p\right)  ,k=3$ and $n=2$(red), $3$
(green), $4$ (blue).}
\end{figure}

\begin{theorem}
~For fixed $k,p,$ the equilibrium trust $\bar{q}$ is eventually decreasing in $n$
and converges to $p.$ The last part is a formalization of Proposition 1.
\end{theorem}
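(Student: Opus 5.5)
The plan is to work directly with the explicit inverse $p=F_{n,k}(q)$ from Theorem 1. For fixed $k$ and $q\in(1/(k+1),1)$ we write
\[
\frac{1}{F_{n,k}(q)}-1=\frac{1-q}{q}\,\rho_n(q),\qquad
\rho_n(q)=\frac{1-(1-q^{\ast})^{n}}{1-(1-q)^{n}}\cdot\frac{1-(1-q)^{n-1}}{1-(1-q^{\ast})^{n-1}}.
\]
Set $x=1-q$ and $y=1-q^{\ast}$, so that $0<x<y<1$. Then $\rho_n=\frac{(1-y^n)(1-x^{n-1})}{(1-x^n)(1-y^{n-1})}$. By the proof of Theorem 2, $\rho_n<1$.

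For convergence to $p$, both $x^n$ and $y^n$ tend to $0$ as $n\to\infty$, for $q$ bounded away from $0$ and $1$. Hence $\rho_n\to1$ uniformly on compact subintervals of $(1/(k+1),1)$, and $F_{n,k}\to F_\infty(q)=q$ uniformly there. Let $\bar q_n$ solve $F_{n,k}(\bar q_n)=p$. Theorem 2 gives $\bar q_n>p$. We also need $\bar q_n$ to stay bounded away from $1$: because $F_{n,k}(q)\le$ its value at $q$ close to $1$, and $F_{n,k}\to q$ on a slightly larger compact interval, for large $n$ we have $F_{n,k}(q_1)>p$ for a fixed $q_1<1$, so $\bar q_n<q_1$ by monotonicity (Theorem 1). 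Uniform convergence on $[p,q_1]$ together with $F_{n,k}(\bar q_n)=p$ then gives $\bar q_n\to p$.

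For eventual monotonicity, we must show $\bar q_{n+1}\le \bar q_n$ for large $n$. Since $F_{n+1,k}$ is increasing, this is equivalent to $F_{n+1,k}(\bar q_n)\ge p=F_{n,k}(\bar q_n)$, that is, $\rho_{n+1}(q)\le \rho_n(q)$ at $q=\bar q_n$. Because $\bar q_n\in[p,q_1]$ for large $n$, it suffices to prove that $\rho_{n+1}(q)\ge\rho_n(q)$ fails in the right direction uniformly on a compact interval $[p,q_1]\subset(1/(k+1),1)$ for $n\ge N$. Write $\log\rho_n=\log(1-y^n)-\log(1-y^{n-1})+\log(1-x^{n-1})-\log(1-x^n)$ and expand for large $n$: $\log\rho_n=y^{n-1}(1-y)-x^{n-1}(1-x)+O(y^{2n})$. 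Since $y>x$, the $y$ term dominates, so $\log\rho_n\approx -y^{n-1}(1-y)$ up to relative error $O((x/y)^{n})+O(y^{n})$. Note that this makes $\rho_n<1$, consistent with Theorem 2 (the sign convention has to be checked against $g>0$). Then $\log\rho_{n+1}-\log\rho_n\approx y^{n-1}(1-y)^2>0$ to leading order, so $\rho_n$ is increasing toward $1$ in $n$. It follows that $F_{n,k}(q)$ is eventually decreasing, toward $q$ from below, and hence $\bar q_n$ is eventually decreasing. This direction agrees with $F_{n,k}(q)<q$ and with the limit.

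The main obstacle is making this asymptotic comparison rigorous and uniform in $q$. We have to bound the error terms, show that $y^{n-1}(1-y)^2$ beats $x^{n-1}(1-x)^2$ plus the $O(y^{2n})$ remainders, and check the endpoint behaviour. We will use that on $[p,q_1]$ one has $x/y\le c<1$ and $y\le c'<1$, with $1-y\ge\delta>0$. We then choose $N$ with $c^{N}$ and $c'^{N}$ small enough relative to $\delta^2$, using the elementary bounds $t\le-\log(1-t)\le t+t^2$ for $t\le1/2$. One subtlety is that the relevant $q=\bar q_n$ moves with $n$, but it lies in the fixed interval $[p,q_1]$, so uniformity handles this.
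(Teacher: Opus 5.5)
Your reduction is right: $\bar q_{n+1}\le\bar q_n$ iff $F_{n+1,k}(\bar q_n)\ge F_{n,k}(\bar q_n)$ iff $\rho_{n+1}(\bar q_n)\le\rho_n(\bar q_n)$. Your convergence argument is also fine. The monotonicity step, however, has its sign backwards, and it reaches the right conclusion only through a second, compensating error.

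Since $1/F_{n,k}(q)-1=\frac{1-q}{q}\rho_n(q)$, Theorem 2's inequality $F_{n,k}(q)<q$ is equivalent to $\rho_n>1$, not $\rho_n<1$. Indeed $\rho_n-1=g(x,y)/\bigl((1-x^n)(1-y^{n-1})\bigr)$ with the paper's $g>0$. The paper's line $f(q,q^*)-f(q^*,q)=g$ has the sign reversed, but its conclusion $f(q,q^*)<f(q^*,q)$ is the correct one.

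Your own expansion $\log\rho_n=y^{n-1}(1-y)-x^{n-1}(1-x)+\dots$ is correct, and its dominant term is $+y^{n-1}(1-y)$. Hence $\log\rho_{n+1}-\log\rho_n\approx-y^{n-1}(1-y)^2<0$, so $\rho_n$ decreases to $1$ from above and $F_{n,k}(q)$ increases to $q$ from below. What you wrote instead, $\rho_{n+1}>\rho_n$, would by your own reduction give $\bar q_{n+1}>\bar q_n$. Your next sentence, ``$F_{n,k}(q)$ eventually decreasing toward $q$ from below, hence $\bar q_n$ eventually decreasing,'' fails on two counts. It is impossible, because a sequence lying below $q$ cannot decrease to $q$. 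It is also a non sequitur: if $F_{n,k}(q)$ decreased in $n$, then $F_{n+1,k}(\bar q_n)<p$ and $\bar q_n$ would have to increase. As written, the key inequality is therefore not established.

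Once the signs are corrected, your route works, and you can skip the error-term bookkeeping. Exactly, $\log\rho_{n+1}-\log\rho_n=-\sum_{j\ge1}\frac1j\bigl[h(y^j)-h(x^j)\bigr]$ with $h(t)=t^{n-1}(1-t)^2$. Since $h$ is increasing on $[0,\frac{n-1}{n+1}]$ and $0<x<y$, every bracket is positive once $y\le\frac{n-1}{n+1}$, i.e. once $n\ge 2k/(1-q)-1$. Because $\bar q_n\le q_1$ for large $n$, this holds at $q=\bar q_n$ for all large $n$.

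This differs from the paper. The paper checks the same inequality only at the equilibrium point, proving $E_{n,k,p}(\bar q_n)-E_{n+1,k,p}(\bar q_n)>0$ by direct algebra and using $p<\bar q_n$ to trade $q$ for $p$. That gives the explicit threshold $n^*(p,k)=3+2\ln k/\ln\frac{k-1+p}{k(1-p)}$, depending only on $p,k$. Convergence then follows from monotonicity plus the limit equation $pq^*-p^*q=0$.

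Your version proves a stronger, $p$-free fact: $F_{n,k}(q)$ is eventually increasing in $n$, uniformly on compact subintervals of $(1/(k+1),1)$. It also obtains convergence independently of monotonicity, via uniform convergence squeezed against $\bar q_n>p$. The cost is that you must first confine $\bar q_n$ to $[p,q_1]$, and your $N$ is not explicit.
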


\begin{proof}
Now consider $E_{n, p, k}(q)-E_{n+1, p, k}(q)$. From equation (\ref{eq:Themainequality}) we have
$$
E_{n,k,p}(q)-E_{n+1,k,p}(q)=pq^{*}[(1-(1-q^{*})^{n})(1-(1-q)^{n-1})-(1-(1-q^{*})^{n+1})(1-(1-q)^{n})]+
$$
\begin{equation}\label{eq:Ediffeq}
p^{*}q[(1-(1-q)^{n+1})(1-(1-q^{*})^{n})-(1-(1-q)^{n})(1-(1-q^{*})^{n-1})].
\end{equation}
From Theorem 2 we have that $p<q$ so that $pq^{*}<p^{*}q$. Together with the fact that 
$(1-(1-q)^{n+1})(1-(1-q^{*})^{n})-(1-(1-q)^{n})(1-(1-q^{*})^{n-1})>0$, replacing $p^{*}q$ by $pq^{*}$ in equation (\ref{eq:Ediffeq}) makes the expression smaller and we obtain
$$
\frac{E_{n,k,p}(q)-E_{n+1,k,p}(q)}{pq^{*}}>[(1-(1-q^{*})^{n})(1-(1-q)^{n-1})-(1-(1-q^{*})^{n+1})(1-(1-q)^{n})]+
$$
\begin{equation}\label{eq:T1s4ineq}
[(1-(1-q)^{n+1})(1-(1-q^{*})^{n})-(1-(1-q)^{n})(1-(1-q^{*})^{n-1})].
\end{equation}
We now proceed to manipulate the right hand side of inequality (\ref{eq:T1s4ineq}). It is easy to see that it rearranges to 
$$
(1-(1-q^{*})^{n})[2-(1-q)^{n-1}-(1-q)^{n+1}]-(1-(1-q)^{n})[2-(1-q^{*})^{n-1})-(1-q^{*})^{n+1}].
$$
Adding and subtracting the term $2(1-(1-q^{*})^{n})(1-(1-q)^{n})$ allows rearrangement of the above to give
$$
-(1-(1-q^{*})^{n})(1-q)^{n-1}(1-(1-q))^{2}+ (1-(1-q)^{n})(1-q^{*})^{n-1})(1-(1-q^{*}))^{2}=
$$
$$
-(1-(1-q^{*})^{n})(1-q)^{n-1}q^{2}+(1-(1-q)^{n})(1-q^{*})^{n-1}(q^{*})^{2}.
$$
This then rearranges to 
$$
q^{2}(q^{*})^{2}[-(1-(1-q^{*})^{n})(1-q)^{n-1} \left( \frac{k}{1-q} \right)^{2}+(1-(1-q)^{n})(1-q^{*})^{n-1}\frac{1}{q^{2}}].
$$
Since $q+q^{*}<1$ we have that $1/q > 1/(1-q^{*})$. We also know that $q>p>p^{*}>q^{*}$ whenever $p>1/(k+1)$ from Theorem 2. Using there two statements, we have that the above expression is greater than
$$
q^{2}(q^{*})^{2}((1-(1-q)^{n})[(1-q^{*})^{n-3}-(1-q)^{n-3}k^{2}]>q^{2}(q^{*})^{2}((1-(1-q)^{n})[(1-p^{*})^{n-3}-(1-p)^{n-3}k^{2}].
$$
The above chain of reasoning thus implies that
\begin{equation}\label{eq:newEinequality}
\frac{E_{n,k,p}(q)-E_{n+1,k,p}(q)}{pq^{*}}>q^{2}(q^{*})^{2}((1-(1-q)^{n})[(1-p^{*})^{n-3}-(1-p)^{n-3}k^{2}].
\end{equation}
The right hand side of equation (\ref{eq:newEinequality}) is positive if the term in the square brackets is positive. This is true whenever
\begin{equation}\label{eq:Npkcondition}
n>n^{*}(p,k)=3+\frac{2 \ln k}{\ln \frac{k-1+p}{k(1-p)}}.
\end{equation}
We note that $n^{*}(p,k)$ is a finite positive number, whenever $p>1/(k+1)$. If $k=1$ or $p \rightarrow 1$ then this value is 3. As $p \rightarrow 1/(k+1)$ it tends to $\infty$.

We thus have that 
\begin{equation}
E_{n,k,p}(q)-E_{n+1,k,p}(q)>0,
\end{equation}
whenever the condition from inequality (\ref{eq:Npkcondition}) holds.

Now suppose that for a particular $n, p, k$ we have that $E_{n,k,p}(q)=0$ when $q=q(n)$. Clearly, then $E_{n+1,k,p}(q(n))<0$.
We know that $E_{n+1,k,p}(q)$ is an increasing function of $p$, so that the $p'$ for which $E_{n+1,k,p'}(q(n))=0$ is larger than $p$. From Theorem 1 we know that $p$ is increasing in $q$, so that the value $q(n+1)$ for which $E(n+1, p, k, q(n+1))=0$ must satisfy 
\begin{equation}
q(n+1)<q(n),
\end{equation}
whenever inequality (\ref{eq:Npkcondition}) holds, which is true for sufficiently large $n$.

Thus, as $n \rightarrow \infty$, $q(n)$ is a decreasing sequence which is bounded below, by $p$. It therefore converges to some number no smaller than $p$. Letting $n \rightarrow \infty$ in equation (\ref{eq:Themainequality}) means that $q$ gets arbitarily close to $p$, since the limiting equation is 
\begin{equation}
E_{n,k,p}(q)=pq^{*}-p^{*}q=0.
\end{equation}
Thus the limiting value of $q$ must be $p$.
\end{proof}

\subsection{Equilibrium Trust is Increasing in the Number of Rays}

As the number of rays in the star increases, the value of the signal becomes
more important as it is selecting one of many possiblities. So for example
knowing that a particular ray leads to the treasure with probability one half
when there are hundreds of rays is very useful. We formally establish this
intuition in the following result. 

\begin{theorem}
For fixed $n,p,$ the equilibrium trust $\bar{q}$ is increasing in $k.$
\end{theorem}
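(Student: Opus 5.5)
The plan is to use the inverse function $p=F_{n,k}(q)$ of Theorem 1 and compare its values at two consecutive values of $k$, holding $n$ and $q$ fixed. Theorem 1 says that $F_{n,k}$ is continuous and strictly increasing in $q$, and that $\bar q$ is defined by $F_{n,k}(\bar q)=p$. It therefore suffices to prove the following monotonicity statement:
\[
F_{n,k+1}(q)<F_{n,k}(q)\quad\text{for each fixed } q \text{ in the relevant range.}
\]
Granting this, let $\bar q_k$ be the equilibrium for $k$, so $F_{n,k}(\bar q_k)=p$. Then $F_{n,k+1}(\bar q_k)<p=F_{n,k+1}(\bar q_{k+1})$. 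Since $F_{n,k+1}$ is increasing, this forces $\bar q_{k+1}>\bar q_k$.

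One domain check is needed. Theorem 2 gives $\bar q_k>p>1/(k+1)>1/(k+2)$. Hence $\bar q_k$ lies in the range where Theorem 1 applies for the parameter $k+1$, and the hypothesis $p>1/(k+2)$ holds automatically.

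To prove that $F_{n,k}(q)$ decreases in $k$, I will use the form (\ref{eq:pequation}), namely $F=1/(1+\Phi)$ with
\[
\Phi=\frac{1-q}{q}\cdot\frac{1-(1-q)^{n-1}}{1-(1-q)^{n}}\cdot\frac{1-(1-q^{*})^{n}}{1-(1-q^{*})^{n-1}} .
\]
For fixed $q$, the first two factors do not depend on $k$. The variable $k$ enters only through $q^{*}=(1-q)/k$, and only in the last factor. So it suffices to show that this last factor increases as $k$ increases.

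Put $y=1-q^{*}$; then $y$ increases strictly with $k$. Using the geometric-series identity already used for (\ref{eq:fratio2}), the last factor becomes
\[
\frac{1-y^{n}}{1-y^{n-1}}=\frac{\sum_{i=0}^{n-1}y^{i}}{\sum_{i=0}^{n-2}y^{i}}
=1+\frac{y^{n-1}}{\sum_{i=0}^{n-2}y^{i}}
=1+\Bigl(\sum_{j=1}^{n-1}y^{-j}\Bigr)^{-1}.
\]
For $0<y<1$ each $y^{-j}$ is decreasing in $y$, so this expression is strictly increasing in $y$ whenever $n\ge 2$. Consequently $\Phi$ increases with $k$ and $F_{n,k}(q)$ decreases with $k$, as required.

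I do not expect a serious obstacle. The only delicate points are bookkeeping. First, the positivity and domain conditions must be checked: $0<q^{*}<q<1$, and $\bar q_k$ must stay above $1/(k+2)$, which the argument via Theorem 2 supplies. Second, the degenerate case $n=1$ must be set aside, since there the ratio is undefined and there is no competition. If one wants $\bar q$ to be increasing in $k$ treated as a continuous parameter, the same computation works with $dq^{*}/dk<0$ in place of the discrete comparison.
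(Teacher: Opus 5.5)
Your proof is correct and follows essentially the same route as the paper. Both fix $q$, show that the reliability $p=F_{n,k}(q)$ solving $E_{n,k,p}(q)=0$ decreases in $k$ (since $k$ enters only through the $q^{*}$-factor), and then invert using the monotonicity of $F_{n,k}$ in $q$ from Theorem 1. The only differences are minor: the paper differentiates in $k$ and uses $ny-(n-1)-y^{n}<0$, while you compare $k$ and $k+1$ directly via $\frac{1-y^{n}}{1-y^{n-1}}=1+\bigl(\sum_{j=1}^{n-1}y^{-j}\bigr)^{-1}$, and you add an explicit domain check that the paper omits.
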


\begin{proof}
Using equation (\ref{eq:Themainequality}) from statement 1, $E(q)=0$ can be re-organised as follows:
\begin{equation}\label{eq:Themainequalitybs4}
\frac{p}{1-p}=\frac{q}{1-q} \frac{1-(1-q)^{n}}{1-(1-q)^{n-1}} \frac{1-(1-q^{*})^{n-1}}{1-(1-q^{*})^{n}}.
\end{equation}

The left-hand side of equation (\ref{eq:Themainequalitybs4}) is increasing in $p$ and so this equation gives us $p$ as a function of $q$ and $k$ (through $q^{*}$), which we shall denote $p(q,k)$.

Differentiating the right-hand side of (\ref{eq:Themainequalitybs4}) with respect to $k$ gives
$$
\frac{d}{dk} \left( \frac{p}{1-p} \right) = \left( \frac{1}{1-(1-q^{*})^{n}} \right)^{2}\frac{1-q}{k^{2}} \times 
$$
$$
\left(-(1-(1-q^{*})^{n}) (n-1)(1-q^{*})^{N-2}+ (1-(1-q^{*})^{n-1}) n(1-q^{*})^{n-1} \right) = 
$$
\begin{equation}
\frac{(1-q^{*})^{n-2}}{(1-(1-q^{*})^{n})^{2}}\frac{1-q}{k^{2}} \left(-(n-1)-(1-q^{*})^{n}+n(1-q^{*}). \right) 
\end{equation}

Since $nx-(n-1)-x^{n}=-(1-x)(n-1-x-x^{2}-\ldots -x^{n-1})<0$ for $0 \leq x<1$, we have that $p(q,k)$ is decreasing in $k$. In particular, $p(q,k+1)< p(q,k)$. 

For fixed $k$ we know that $p$ increases with $q$, so to find $q'$ such that $p(q',k+1)=p(q,k)$, we need $q'>q$. Thus $q$ is increasing with $k$.
\end{proof}

It is interesting to compare this result with that of Alpern (2023) for a
single Searcher trying to minimize the expected time to reach target $H,$
where the optimal trust is given by $q\left(  p,k\right)  =\left(  p-\sqrt
{k}\sqrt{p\left(  1-p\right)  }\right)  /\left(  1-(k+1)(1-p)\right)  $ which
is decreasing in rays $k.$ 

\section{Conclusions}

Network search games have until now considered conflict between a Searcher and an immobile or mobile Hider, over the time (search time) T until the Hider is found. This is the first paper to consider a competitive search on a network among Searchers to be the first to find the Hider, although this has been studied a little in a non-network setting, see Nakai (1989) and Duvocell et al (2021). Our main result is that for symmetric Searchers (starting at the same point, using the same Satnav) there is always a unique symmetric equilibrium, where all Searchers trust the given directions with the same trust probability. Further, this equilibrium level increases with the probability that the directions are correct.

An interesting finding in the case of a large, or infinite, population of Searchers is that the equilibrium is to follow the suggested branch with the probability that it is correct. This is an example of what is known as probability matching. Usually this strategy is irrational (sub optimal). For smaller populations the trust level exceeds this probability, but (eventually) decreases with the population size, converging to it as the population becomes large. The trust probability also increases with the number of alternative directions that could be chosen.

We should also comment on the generality of our network, the star network. While very specific, this network was used in the related optimization scenario of the Faulty Satnav Problem of Alpern (2023) to begin a solution for more general networks. After the star was solved there, that solution was extended to networks with a bridge edge to the destination node  H. Then this was extended to tree networks, where every edge is a bridge. It is well known that game problems are harder to solve than optimization ones and so this paper is restricted to the case of a star, with similar extensions hoped for in the future.

Another extension is to allow the trust probability to depend on the time the search has been going on for. If a Searcher knows the time (how long they have been searching) they might decide to trust less over time, as it becomes more likely that the Satnav (GPS) has been wrong. \\

\hspace{-0.7cm} {\bf Acknowledgements:} Mark Broom was supported by the European Union’s Horizon 2020 research and innovation programme under the Marie Sklodowska-Curie grant agreement No 955708.

\end{document}